\tikzstyle{main node} =[circle,fill=white!20,draw,font=\sffamily\Large\bfseries]
\tikzstyle{terminal}=[circle,fill=white!20,draw,font=\sffamily\Large\bfseries,color=purple,fill=none]
\definecolor{myurlcolor}{rgb}{0.5,0,0}
\definecolor{mycitecolor}{rgb}{0,0,0.8}
\definecolor{myrefcolor}{rgb}{0,0,0.8}
\newcommand{\R}{\mathbb{R}}
\newcommand{\Var}{\mathrm{Var}}
\newcommand{\define}[1]{{\bf \boldmath{#1}}}
\newcommand{\maps}{\colon}
\newcommand{\beq}{\begin{equation}}
\newcommand{\eeq}{\end{equation}}
\theoremstyle{plain}
\newtheorem{thm}{Theorem}
\theoremstyle{definition}
\theoremstyle{remark}
\begin{document}

%-------------------------------------------------------------------

%%%%%%%%%%%% title page stuff %%%%%%%%%%%%%%%%%%%%%%%%%%

\title[The fundamental theorem of natural selection]{The fundamental theorem\\of natural selection}

\author{John C.~Baez}
\address{Department of Mathematics\\ 
University of California\\ 
Riverside CA 92521\\
USA \\
and Centre for Quantum Technologies\\ 
National University of Singapore\\ 
Singapore 117543}

\date{\today}

\begin{abstract}
Suppose we have $n$ different types of self-replicating entity, with the population $P_i$ of 
the $i$th type changing at a rate equal to $P_i$ times the fitness $f_i$ of that type.   Suppose the fitness $f_i$ is any continuous function of all the populations $P_1, \dots, P_n$.   Let $p_i$ be the fraction of replicators that are of the $i$th type. Then $p = (p_1, \dots, p_n)$ is a time-dependent probability distribution, and we prove that its speed as measured by the Fisher information metric equals the variance in fitness.   In rough terms, this says that the speed at which information is updated through natural selection equals the variance in fitness.  This result can be seen as a modified version of Fisher's fundamental theorem of natural selection.  We compare it to Fisher's original result as interpreted by Price, Ewens and Edwards.
\end{abstract}

\maketitle

\section{Introduction}
\label{introduction}

In 1930, Fisher \cite{Fisher1930} stated his ``fundamental theorem of natural selection'' as follows:
\begin{quote}
The rate of increase in fitness of any organism at any time is equal to its genetic variance in fitness at that time.
\end{quote}
Some tried to make this statement precise as follows:
\begin{quote}
The time derivative of the mean fitness of a population equals the variance of its fitness.  
\end{quote}
But this is only true under very restrictive conditions, so a controversy was ignited.  

An interesting resolution was proposed by Price \cite{Price}, and later amplified by Ewens \cite{Ewens} and Edwards \cite{Edwards}.   We can formalize their idea as follows.  Suppose we have $n$ types of self-replicating entity, and idealize the population of the $i$th type as a positive real-valued function $P_i(t)$.   Suppose
\[   \frac{d}{dt} P_i(t) = f_i(P_1(t), \dots, P_n(t)) \, P_i(t) \]
where the fitness $f_i$ is a differentiable function of the populations of every type of replicator.
The mean fitness at time $t$ is 
\[    \overline{f}(t) = \sum_{i=1}^n p_i(t) \, f_i(P_1(t), \dots, P_n(t))  \]
where $p_i(t)$ is the fraction of replicators of the $i$th type:
\[    p_i(t) = \frac{P_i(t)}{\phantom{\Big|} \sum_{j = 1}^n P_j(t) } .\]
By the product rule, the rate of change of the mean fitness is the sum of two terms:
\[    \frac{d}{dt}  \overline{f}(t) = \sum_{i=1}^n \dot{p}_i(t) \, f_i(P_1(t), \dots, P_n(t)) 
\; + \;  \sum_{i=1}^n p_i(t) \,\frac{d}{dt} f_i(P_1(t), \dots, P_n(t)).  \]
The \emph{first} of these two terms equals the variance of the fitness at time $t$.   We give the
easy proof in Theorem \ref{thm:1}.   Unfortunately, the conceptual significance of this first term is much less clear than that of the total rate of change of mean fitness.   Ewens concluded that ``the theorem does not provide the substantial biological statement that Fisher claimed''.

But there is another way out, based on an idea Fisher himself introduced in 1922: Fisher information \cite{Fisher1922}.   Fisher information gives rise to a Riemannian metric on the space of probability distributions on a finite set, called the `Fisher information metric'---or in the context of evolutionary game theory, the `Shahshahani metric' \cite{Akin1,Akin2,Shahshahani}.   Using this metric we can define the speed at which a time-dependent probability distribution changes with time.   We call this its `Fisher speed'.   Under just the assumptions already stated, we prove in Theorem \ref{thm:2} that the Fisher speed of the probability distribution 
\[    p(t) = (p_1(t), \dots, p_n(t))  \]
is the variance of the fitness at time $t$.

As explained by Harper \cite{Harper1, Harper2}, natural selection can be thought of as a learning process, and studied using ideas from information geometry \cite{Amari}---that is, the geometry of the space of probability distributions.  As $p(t)$ changes with time, the rate at which information is updated is closely connected to its Fisher speed.   Thus, our revised version of the fundamental theorem of natural selection can be loosely stated as follows:
\begin{quote}
As a population changes with time, the rate at which information is updated equals the variance of fitness. 
\end{quote}
The precise statement, with all the hypotheses, is in Theorem \ref{thm:2}.  But one lesson is this: variance in fitness may not cause `progress' in the sense of increased mean fitness, but it does cause change.

\section{The time derivative of mean fitness}
\label{sec:fitness}

Suppose we have $n$ different types of entity, which we call \define{replicators}.   Let $P_i(t),$ or $P_i$ for short, be the population of the $i$th type of replicator at time $t$, which we idealize as taking positive real values.  Then a very general form of the \define{Lotka--Volterra equations} says that
\begin{equation}
\label{eq:Lotka-Volterra}
\displaystyle{ \frac{d P_i}{d t} = f_i(P_1, \dots, P_n) \, P_i }. 
\end{equation}
where $f_i \maps [0,\infty)^n \to \R$ is  the \define{fitness function} of the $i$th type
of replicator.   One might also consider fitness functions with explicit time dependence, but
we do not do so here.   

Let $p_i(t)$, or $p_i$ for short, be the probability at time $t$ that a randomly chosen replicator will be of the $i$th type.   More precisely, this is the fraction of replicators of the $i$th type:
\begin{equation}
\label{eq:probability}
\displaystyle{  p_i = \frac{P_i}{\sum_j P_j} }. 
\end{equation}
Using these probabilities we can define the \define{mean fitness} $\overline{f}$ by
\begin{equation}
\label{eq:mean}
 \overline{f} = \sum_{j = 1}^n p_j \, f_j(P_1, \dots, P_n)   
\end{equation}
and the \define{variance in fitness} by
\begin{equation}
\label{eq:variance}
   \Var(f) = \sum_{j = 1}^n p_j \left(f_j(P_1, \dots, P_n) - \overline{f}\right)^2.  
\end{equation}
These quantities are also functions of $t$, but we suppress the $t$ dependence in our notation.

Fisher said that the variance in fitness equals the rate of change of mean fitness.  Price \cite{Price}, Ewens \cite{Ewens} and Edwards \cite{Edwards} argued that Fisher only meant to equate \emph{part} of the rate of change in mean fitness to the variance in fitness.   We can see this in the present context as follows.  The time derivative of the mean fitness is the sum of two terms:
\begin{equation}
\label{eq:fitness_change}
\frac{d\overline{f}}{dt} = \sum_{i=1}^n \dot{p}_i \, f_i(P_1(t), \dots, P_n(t))  \; + \; \sum_{i=1}^n p_i \, \frac{d}{dt} f_i(P_1(t), \dots, P_n(t)) 
\end{equation}
and as we now show, the \emph{first} term equals the variance in fitness.  

\begin{thm}
\label{thm:1}
Suppose positive real-valued functions $P_i(t)$ obey the Lotka--Volterra equations for some continuous functions $f_i \maps [0,\infty)^n \to \R$.   Then
\[  \sum_{i=1}^n \dot{p}_i \, f_i(P_1(t), \dots, P_n(t)) = \Var(f) .\]
\end{thm}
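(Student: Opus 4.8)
The plan is to compute $\dot p_i$ directly from the quotient definition $p_i = P_i / \sum_j P_j$ and substitute into the sum, hoping everything collapses into the variance formula. First I would differentiate $p_i$ using the quotient rule. Writing $S = \sum_j P_j$ for the total population, we have
\[
  \dot p_i = \frac{\dot P_i \, S - P_i \, \dot S}{S^2}
  = \frac{\dot P_i}{S} - \frac{P_i}{S}\cdot \frac{\dot S}{S}.
\]
Now I would invoke the Lotka--Volterra equations $\dot P_i = f_i P_i$ to replace $\dot P_i$ by $f_i P_i$, and correspondingly $\dot S = \sum_j \dot P_j = \sum_j f_j P_j$. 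Dividing through by $S$ and recognizing $P_i/S = p_i$, this should give the clean expression
\[
  \dot p_i = p_i \bigl( f_i - \overline{f}\, \bigr),
\]
since $\dot S / S = \sum_j f_j (P_j/S) = \sum_j p_j f_j = \overline{f}$. This intermediate identity — that the fraction $p_i$ evolves by the replicator equation with the deviation of $f_i$ from the mean — is the real content of the calculation.

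Having obtained $\dot p_i = p_i(f_i - \overline{f}\,)$, the remaining step is pure algebra. I would substitute into the target sum to get
\[
  \sum_{i=1}^n \dot p_i \, f_i
  = \sum_{i=1}^n p_i \bigl(f_i - \overline{f}\,\bigr) f_i.
\]
To match this against $\Var(f) = \sum_i p_i (f_i - \overline{f}\,)^2$, I would expand the variance and note that the difference between the two expressions is $\overline{f} \sum_i p_i (f_i - \overline{f}\,)$. That correction term vanishes because $\sum_i p_i f_i = \overline{f}$ and $\sum_i p_i = 1$, so $\sum_i p_i (f_i - \overline{f}\,) = \overline{f} - \overline{f} = 0$. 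Equivalently, I can simply note that adding $0 = -\overline{f}\sum_i p_i(f_i - \overline{f}\,)$ turns $\sum_i p_i(f_i-\overline{f}\,)f_i$ into $\sum_i p_i (f_i - \overline{f}\,)^2$, which is exactly $\Var(f)$.

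I expect no serious obstacle here; the only point requiring a moment's care is the bookkeeping in the quotient-rule step, making sure the $\dot S/S$ term is correctly identified as $\overline{f}$ rather than something involving unnormalized populations. The differentiability hypothesis on the $P_i$ is guaranteed since $\dot P_i = f_i P_i$ with $f_i$ continuous, so $\dot p_i$ exists and the manipulations are legitimate; the positivity of the $P_i$ ensures $S > 0$ so that dividing by $S$ and $S^2$ is valid throughout. The whole argument is a two-line consequence of the key identity $\dot p_i = p_i(f_i - \overline{f}\,)$, which is precisely why the author advertises it as the ``easy proof.''
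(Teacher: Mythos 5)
Your proposal is correct and follows essentially the same route as the paper: derive the replicator equation $\dot{p}_i = (f_i - \overline{f})\,p_i$ from the quotient rule and the Lotka--Volterra equations, then substitute into $\sum_i \dot{p}_i f_i$ and use the vanishing of $\overline{f}\sum_i p_i(f_i - \overline{f})$ to recover $\Var(f)$. The only cosmetic difference is that the paper phrases the last step as subtracting equation \eqref{eq:rate2} from equation \eqref{eq:rate1}, which is exactly your ``adding zero'' manipulation.
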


\begin{proof}
First we recall a standard formula for the time derivative $\dot{p}_i$.  Using the definition of $p_i$ in equation \eqref{eq:probability}, the quotient rule gives
\[ 
\displaystyle{ \dot{p}_i = \frac{\dot{P}_i}{\sum_j P_j} 
 - \frac{P_i \left(\sum_j \dot{P}_j \right)}{(  \sum_j P_j )^2 } }
\]
where all sums are from $1$ to $n$.    Using the Lotka--Volterra equations this becomes
\[ 
\displaystyle{ \dot{p}_i = \frac{f_i P_i}{\sum_j P_j} 
 - \frac{P_i \left(\sum_j f_j P_j \right)}{(  \sum_j P_j )^2 } }
\]
where we write $f_i$ to mean $f_i(P_1, \dots, P_n)$, and similarly for $f_j$.
Using the definition of $p_i$ again, this simplifies to:
\[ 
\dot{p}_i = f_i p_i  - \big( \sum_j f_j p_j \big) p_i 
\]
and thanks to the definition of mean fitness in equation \eqref{eq:mean}, this reduces to
the well-known \define{replicator equation}:
\begin{equation}
\label{eq:replicator}
\dot{p}_i  = \left(f_i - \overline{f}\right) p_i .
\end{equation}

Now, the replicator equation implies
\begin{equation}
\label{eq:rate1}
\sum_i f_i \dot{p}_i  = \sum_i f_i \left(f_i - \overline{f}\right) p_i .
\end{equation}
On the other hand,
\begin{equation}
\label{eq:rate2}
  \sum_i \overline{f} (f_i - \overline{f}) p_i = \overline{f} \sum_i (f_i - \overline{f}) p_i = 0 
\end{equation}
since $\sum_i f_i p_i = \overline{f}$ but also $\sum_i \overline{f} p_i = \overline{f}$.  Subtracting
equation \eqref{eq:rate2} from equation \eqref{eq:rate1} we obtain
\[   \sum_i f_i \dot{p}_i = \sum_i (f_i - \overline{f}) (f_i - \overline{f}) p_i  \]
or simply
\[    \sum_i f_i \dot{p}_i = \Var(f) .   \qedhere\]
\end{proof}

The second term of equation (\ref{eq:fitness_change}) only vanishes in special cases, e.g.\ when the fitness functions $f_i$ are constant.   When the second term vanishes we have 
\[   \frac{d\overline{f}}{dt}  = \Var(f) .\]
This is a satisfying result.  It says the mean fitness does not decrease, and it increases whenever some replicators are more fit than others, at a rate equal to the variance in fitness.   But we would like a more general result, and we can state one using a concept from information theory: the Fisher speed.

\section{The Fisher speed}
\label{sec:Fisher}

While Theorem \ref{thm:1} allows us to express the variance in fitness in terms of the time derivatives of the probabilities $p_i$, it does so in a way that also explicitly involves the fitness functions $f_i$.   We now prove a simpler formula for the variance in fitness, which equates it with the square of the `Fisher speed' of the probability distribution $p = (p_1, \dots, p_n)$.

The space of probability distributions on the set $\{1, \dots, n\}$ is the $(n-1)$-simplex
\[ \Delta^{n-1} = \{ (x_1, \dots, x_n) : \; x_i \ge 0, \; \sum_{i=1}^n x_i = 1  \} \]
The \define{Fisher metric} is the Riemannian metric $g$ on the interior of the $(n-1)$-simplex such that given a point $p$ in the interior of $\Delta^{n-1}$ and two tangent vectors $v,w$ we have
\[ g(v,w) =  \sum_{i=1}^n \frac{v_i w_i}{p_i} . \]
Here we are describing the tangent vectors $v,w$ as vectors in $ \R^n$ with the property that the sum of their components is zero: this makes them tangent to the $(n-1)$-simplex.  We are demanding that $x$ be in the interior of the simplex to avoid dividing by zero, since on the boundary of the simplex we have $p_i = 0$ for at least one choice of $i$.

If we have a time-dependent probability distribution $p(t)$ moving in the interior of the $(n-1)$-simplex as a function of time, its \define{Fisher speed} is defined by
\[ \sqrt{g(\dot{p}(t), \dot{p}(t))} = \left( \sum_{i=1}^n \frac{\dot{p}_i(t)^2}{p_i(t)} \right)^{\!\! 1/2} \]
if the derivative $ \dot{p}(t)$ exists.   This is the usual formula for the speed of a curve moving in a Riemannian manifold, specialized to the case at hand.

These are all the formulas needed to prove our result.  But for readers unfamiliar with the Fisher metric, a few words may provide some intuition.  The factor of $1/p_i$ in the Fisher metric changes the geometry of the simplex so that it becomes round, with the geometry of a portion of a sphere in $\R^n$.  But more relevant here is the Fisher metric's connection to relative information---a generalization of Shannon information that depends on two probability distributions rather than just one \cite{CoverThomas}.  Given probability distributions $p, q \in \Delta^{n-1}$, the \define{information of} $q$ \define{relative to} $p$ is
\[  I(q,p) = \sum_{i = 1}^n q_i \ln\left(\frac{q_i}{p_i}\right).  \]
This is the amount of information that has been updated if one replaces the prior distribution $p$ with the posterior $q$.   So, sometimes relative information is called the `information gain'.  It is also called `relative entropy' or `Kullback--Leibler divergence'.   It has many applications to biology \cite{BaezPollard,Harper1,Harper2,Leinster}.

Suppose $p(t)$ is a smooth curve in the interior of the $(n-1)$-simplex.  We can ask the rate at which information is being updated as time passes.  Perhaps surprisingly, an easy calculation gives
\[  \frac{d}{dt} I(p(t), p(t_0))\Big|_{t = t_0} = 0. \]
Thus, to first order, information is not being updated at all at any time $t_0 \in \R.$   However, another well-known calculation (see e.g.\ \cite{Baez}) shows that
\[  \frac{d^2}{dt^2} I(p(t), p(t_0))\Big|_{t = t_0} =  g(\dot{p}(t_0), \dot{p}(t_0))  . \]
So, to second order in $t-t_0$, the square of the Fisher speed determines how much information is updated when we pass from $p(t_0)$ to $p(t)$.

\begin{thm}
\label{thm:2}
Suppose positive real-valued functions $P_i(t)$ obey the Lotka--Volterra equations for some continuous functions $f_i \maps [0,\infty)^n \to \R$.   Then the square of the Fisher speed of the probability distribution $p(t)$ is the variance of the fitness:
\[  g(\dot{p}, \dot{p})  = \Var(f(P)) .\]
\end{thm}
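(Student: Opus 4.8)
The plan is to reduce everything to the replicator equation already established inside the proof of Theorem~\ref{thm:1}, after which the result follows by a single substitution. Recall that the Fisher speed squared is, by definition,
\[ g(\dot{p}, \dot{p}) = \sum_{i=1}^n \frac{\dot{p}_i^2}{p_i}, \]
and that this expression is well-defined precisely because the hypothesis $P_i(t) > 0$ forces each $p_i(t) > 0$, so $p(t)$ stays in the interior of the simplex and we never divide by zero.

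First I would invoke the replicator equation \eqref{eq:replicator}, namely $\dot{p}_i = (f_i - \overline{f})\, p_i$, which was derived from the Lotka--Volterra equations and the quotient rule. Substituting into the numerator gives $\dot{p}_i^2 = (f_i - \overline{f})^2 p_i^2$, so that each summand becomes
\[ \frac{\dot{p}_i^2}{p_i} = \frac{(f_i - \overline{f})^2 p_i^2}{p_i} = (f_i - \overline{f})^2 p_i, \]
where exactly one factor of $p_i$ cancels. Summing over $i$ then yields $\sum_{i=1}^n (f_i - \overline{f})^2 p_i$, which is precisely the variance in fitness as defined in \eqref{eq:variance}.

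I do not expect a genuine obstacle here: the only point requiring care is that the cancellation of a single factor of $p_i$ is legitimate, which is guaranteed by positivity of the populations. In this sense Theorem~\ref{thm:2} is arguably simpler than Theorem~\ref{thm:1}, since it bypasses the intermediate manipulation \eqref{eq:rate2} that separately disposed of the $\overline{f}$ cross-term; the $1/p_i$ weighting built into the Fisher metric performs exactly the bookkeeping that makes the mean-fitness terms drop out automatically, leaving the weighted sum of squared deviations.
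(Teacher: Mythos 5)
Your proposal is correct and follows essentially the same route as the paper's own proof: substitute the replicator equation $\dot{p}_i = (f_i - \overline{f})\,p_i$ into $g(\dot{p},\dot{p}) = \sum_{i=1}^n \dot{p}_i^2/p_i$, cancel one factor of $p_i$, and recognize the result as the variance \eqref{eq:variance}. Your added remark that positivity of the $P_i$ keeps $p(t)$ in the interior of the simplex, so the cancellation and the metric itself are well-defined, is a point the paper leaves implicit and is worth making explicit.
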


\begin{proof}
Consider the square of the Fisher speed
\[  g(\dot{p}, \dot{p}) = \sum_{i=1}^n \frac{\dot{p}_i^2}{p_i}  \]
and use the replicator equation 
\[   \dot{p}_i  = \left(f_i - \overline{f}\right) p_i \]
obtaining
\[
\begin{array}{ccl} 
\displaystyle{ g(\dot{p}, \dot{p})} &=&
\displaystyle{ \sum_{i=1}^n ( f_i(P) - \overline f(P))^2 p_i } \\ \\
&=& \mathrm{Var}(f)
\end{array} 
\]
as desired. 
\end{proof}

The generality of this result is remarkable.   Formally, \emph{any} autonomous system of first-order differential equations
\[    \frac{d}{dt} P_i(t) = F_i(P_1(t), \dots, P_n(t))  \]
can be rewritten as Lotka--Volterra equations
\[     \frac{d}{dt} P_i(t) = f_i(P_1(t), \dots, P_n(t)) \, P_i(t)   \]
simply by setting
\[    f_i(P_1, \dots, P_n) = F_i(P_1 , \dots , P_n) / P_i  .\]
In general $f_i$ is undefined when $P_i = 0$, but this not a problem if we restrict ourselves to situations where all the populations $P_i$ are positive; in these situations Theorems \ref{thm:1} and \ref{thm:2} apply.

\subsection*{Acknowledgments}

This work was supported by the Topos Institute.  I thank Marc Harper for his invaluable continued help with this subject, and evolutionary game theory more generally.  I also thank Rob Spekkens for some helpful comments.

\end{document}